\newtheorem{theorem}{Theorem}
\newtheorem{lemma}{Lemma}
\newenvironment{proof}{{\noindent\it Proof}\quad}{\hfill $\square$\par}
\newtheorem{prop}{Proposition}[section]
\newtheorem{coro}{Corollary}[section]
\newtheorem{remark}{Remark}
\begin{document}

\title{A Simple Evaluation for the Secrecy Outage Probability Over Generalized-$K$ Fading Channels}

\author{Hui Zhao,~\IEEEmembership{Student Member,~IEEE,} Yuanwei Liu,~\IEEEmembership{Senior Member,~IEEE,} \\ Ahmed Sultan-Salem,~\IEEEmembership{Member,~IEEE,}  and Mohamed-Slim Alouini,~\IEEEmembership{Fellow,~IEEE} \vspace{-1cm}

\thanks{Manuscript received June 5, 2019; accepted June 25, 2019. The work of H. Zhao was done while he was studying at KAUST. The associate editor coordinating the review of this paper and approving it for publication was Y. Deng. (\emph{Corresponding author: Yuanwei Liu.})}

\thanks{H. Zhao was with the Computer, Electrical, and Mathematical Science and Engineering Division, King Abdullah University of Science and Technology (KAUST), Thuwal 23955-6900, Saudi Arabia, and he is now with the Communication Systems Department, EURECOM, Sophia Antipolis 06410, France (email: hui.zhao@kaust.edu.sa).}

\thanks{Y. Liu is with the School of Electronic Engineering and Computer Science, Queen Mary University of London, London E1 4NS, U.K (email: yuanwei.liu@qmul.ac.uk).}

\thanks{A. Sultan-Salem and M.-S. Alouini are with the Computer, Electrical, and Mathematical Science and Engineering Division, King Abdullah University of Science and Technology, Thuwal 23955-6900, Saudi Arabia (email: ahmed.salem@kaust.edu.sa; slim.alouini@kaust.edu.sa).}

\thanks{Color versions of one or more of the figures in this paper are available online at http://ieeexplore.ieee.org.}

\thanks{Digital Object Identifier 10.1109/LCOMM.2019.2926360}
}

\markboth{IEEE Communications Letters,~Vol.~XX, No.~XX, XXX~2019}
{}

\maketitle

\begin{abstract}
A simple approximation for the secrecy outage probability (SOP) over generalized-$K$ fading channels is developed. This approximation becomes  tighter as the average signal-to-noise ratio (SNR) of the wiretap channel decreases.
Based on this simple expression, we also analyze the asymptotic SOP in the high SNR region of the main channel. Besides simplifying the SOP expression significantly, this asymptotic SOP expression reveals the secrecy diversity order in a general case. Numerical results demonstrate the high accuracy of our proposed approximation results.
\end{abstract}

\begin{IEEEkeywords}
Asymptotic analysis, generalized-$K$ fading channels, physical layer security, and  secrecy outage probability.
\end{IEEEkeywords}

\IEEEpeerreviewmaketitle

\vspace{-0.5cm}
\section{Introduction}
The Generalized-$K$ (GK) fading model was proposed in \cite{Bithas} to approximate the composite fading channel (small-scale fading plus large-scale fading). Some important metrics, such as outage probability and ergodic capacity, were analyzed in \cite{Bithas}-\cite{Lateef}. However, as there is a modified Bessel function of the second kind in the probability density function (PDF) of the GK fading model, it is usually difficult to derive the closed-form expressions for some important metrics in more complicated models. To address this issue, the authors in \cite{Atapattu} proposed a mixture Gamma distribution method to approximate the PDF of GK fading with a high accuracy when the number of summation terms becomes large.

Physical layer security has the potential to address the emerging security issues in modern wireless networks \cite{Bloch}-\cite{Zhang_TVT}. The security under many fading channel models, such as Rayleigh, Nakagami-$m$, Lognormal, and Nakagami-$m$/Gamma models, has been analyzed \cite{Juan_TCOM}-\cite{Peppas}.  The authors in \cite{Lei_CL} firstly investigated the typical Wyner's three-node model of \cite{Bloch} in physical layer security over GK fading channels, and derived the closed-form expression for the secrecy outage probability (SOP) by using the mixture Gamma distribution approximation introduced in \cite{Atapattu}. This mixture Gamma approximation was also adopted in many works about the secure analysis over GK fading channels, such as \cite{Lei_FITEE}-\cite{Zhao_CL}.  Further, based on the mixture Gamma distribution model in \cite{Atapattu}, the authors in \cite{Kang_WCL} developed a general method to analyze the SOP by using the Fox's $H$-function.
Although the authors in \cite{Lei_IET,Lei_TVT} used the exact PDF of the GK fading model to derive the SOP expression, the SOP expression was derived based on the assumption of a large average signal-to-noise ratio (SNR) of the wiretap channel, which means that the derived results will deviate the exact results significantly in the low SNR region of the wiretap channel.
The secure approximation analysis in \cite{Pan_TVT} for Log-normal and Log-normal-Rayleigh composite fading channels based on the work of \cite{Holtzman} shows a strong robustness in the low variance region of the wiretap channel, where the difference calculation takes place of the integration calculation, resulting in a much faster computation for the SOP.
Although the asymptotic SOP (ASOP) valid in the high SNR region of the main channel, showing the secrecy diversity order and array gain, was analyzed in \cite{Lei_FITEE,Zhao_CL}, the authors only considered a special parameter setting of GK fading, which cannot be used in the general GK fading case.

In this letter, we consider the approximation method proposed by \cite{Pan_TVT,Holtzman} to derive a simple and robust closed-form expression for the SOP in the typical Wyner's three-node model over GK fading channels. When the average SNR of the main channel is sufficiently large,  the asymptotic analysis in the general case is also presented to get the secrecy diversity order and array gain, which is useful and important for the secure system design.

\section{System Model}
In the typical Wyner's three-node model, there is a source ($S$) transmitting confidential message to a destination ($D$), while an eavesdropper ($E$) wants to overhear the information from $S$ to $D$. We assume that all links undergo independent GK fading. To be realistic, we consider a silent eavesdropping scenario where $S$ does not know the channel state information of the $S-E$ link. In this case, perfect security cannot be guaranteed, because $S$ has to adopt a constant rate of confidential message ($R_s$).
The SOP is the probability that the secrecy capacity ($C_s$) is less then $R_s$ \cite{Bloch}, where $C_s$ is defined as
$
{C_s} =\max\{ {\log _2}\left( {1 + {\gamma _d}} \right) - {\log _2}\left( {1 + {\gamma _e}} \right), 0\},
$
where $\gamma_d$ and $\gamma_e$ are the instantaneous SNRs at $D$ and $E$, respectively, and $\max\{a,b\}=a$ if $a \ge b$, or otherwise $\max\{a,b\}=b$.
Thus, we can write the SOP as
\begin{align}\label{SOP}
&P_{\rm{sop}} = \Pr \left\{ {{{\log }_2}\left( {1 + {\gamma _d}} \right) - {{\log }_2}\left( {1 + {\gamma _e}} \right) \le {R_s}} \right\} \notag\\
&= \int\limits_0^\infty  {{F_{{\gamma _d}}}\left( {\lambda  - 1 + \lambda x} \right)} {f_{{\gamma _e}}}\left( x \right)dx
= {\mathbb{E}_{{\gamma _e}}}\left\{ {{F_{{\gamma _d}}}\left( {\lambda  - 1 + \lambda {\gamma _e}} \right)} \right\},
\end{align}
where $\lambda=2^{R_s}$, $f_{\gamma_t} (\cdot)$ and $F_{\gamma_t} (\cdot)$ are the PDF and cumulative density function (CDF) of $\gamma_t \in \{\gamma_d, \gamma_e\}$, respectively, and $\mathbb{E}\{\cdot\}$ denotes the expectation operator.

In this letter, we adopt a general approximation to calculate the SOP in \eqref{SOP}.  Following \cite{Holtzman}, we approximate $P(X)$, a real-valued function of a random variable $X$ with mean $\mu$ and variance $\sigma^2$, using the $N$-th degree Taylor polynomial, i.e.,
\begin{align}\label{P}
P\left( X \right) \simeq \sum\nolimits_{n = 0}^N {\frac{{{{\left( {X - \mu } \right)}^n}}}{{n!}}} {P^{\left( n \right)}}\left( \mu  \right),
\end{align}
where $P^{(n)}(X)$ denotes the $n$-th derivative of $P(X)$ with respect to $X$. Taking expectation of both sides in \eqref{P}, we have
\begin{align}
\mathbb{E}\left\{ {P\left( X \right)} \right\} \simeq \sum\nolimits_{n = 0}^N {\frac{{{P^{\left( n \right)}}\left( \mu  \right)}}{{n!}}} \mathbb{E}\left\{ {{{\left( {X - \mu } \right)}^n}} \right\}.
\end{align}
If $P(x)$ has  non-zero higher order derivatives, we have an approximation error,  which unfortunately may not  decrease with increasing the summation terms  in general cases. \cite{Holtzman} proposed that $N=2$ is a good approximation with the compromise between complexity and accuracy, if the variance of $X$ is not too large. When $N=2$, the approximate expectation becomes
\begin{align}\label{P_appro}
\mathbb{E}\left\{ {P\left( X \right)} \right\} \simeq P\left( \mu  \right) + \frac{{{\sigma ^2}{P^{\left( 2 \right)}}\left( \mu  \right)}}{2}.
\end{align}
By using the approximation in \eqref{P_appro}, the SOP in \eqref{SOP} can be approximated as
\begin{align}\label{SOP_appro}
P_{\rm{sop}} \simeq P\left( {{{\overline \gamma  }_e}} \right) + \frac{{\sigma _e^2{P^{\left( 2 \right)}}\left( {{{\overline \gamma  }_e}} \right)}}{2},
\end{align}
where $P\left( {{\overline\gamma _e}} \right) = {F_{{\gamma _d}}}\left( {\lambda  - 1 + \lambda {\overline\gamma _e}} \right)$, $\overline \gamma_t$ and $\sigma_t^2$ are the mean and variance of $\gamma_t \in \{\gamma_d,\gamma_e\}$, respectively.
\section{SOP Over GK Fading Channels}
In this section, the closed-form expression for the SOP over GK fading channels will be given, along with two special cases of the GK model, i.e., Rayleigh and Nakagami-$m$ fading channels.
\begin{theorem}
\emph{In the typical Wyner's three-node model, the closed-form expression for the SOP over GK fading channels can be approximated by}
\begin{align}\label{SOP_final}
&{P_{{\rm{sop}}}} = \frac{{G_{1,3}^{2,1}\left( {\frac{{{k_d}{m_d}\left( {\lambda  - 1 + \lambda {{\overline \gamma  }_e}} \right)}}{{{{\overline \gamma  }_d}}}\left| {_{{k_d}{m_d},0}^1} \right.} \right)}}{{\Gamma \left( {{k_d}} \right)\Gamma \left( {{m_d}} \right)}} + \overline \gamma  _e^2{\lambda ^2} \notag\\
&\frac{{\left( {\frac{{\left( {{k_e} + 1} \right)\left( {{m_e} + 1} \right)}}{{{k_e}{m_e}}} - 1} \right)G_{2,4}^{2,2}\left( {\frac{{{k_d}{m_d}\left( {\lambda  - 1 + \lambda {{\overline \gamma  }_e}} \right)}}{{{{\overline \gamma  }_d}}}\left| {_{{k_d},{m_d},0,2}^{0,1}} \right.} \right)}}{{2\Gamma \left( {{k_d}} \right)\Gamma \left( {{m_d}} \right){{\left( {\lambda  - 1 + \lambda {{\overline \gamma  }_e}} \right)}^2}}},
\end{align}
\emph{where $m_t>0$ and $k_t>0$ ($t \in \{d,e\}$) are the small-scale and large-scale fading parameters of the GK fading model, respectively, and $G^{\cdot,\cdot}_{\cdot,\cdot}(\cdot)$ denotes the Meijer's G-function \cite{Gradshteyn}. This approximation is tight for small $\overline \gamma_e$. When $\overline \gamma_e$ is too large, (15) in \cite{Lei_IET} derived based on $\overline \gamma_e \gg 0$ can be used to derive the SOP. Therefore, an approximate and simple SOP expression with a high accuracy can be always derived for any $\overline \gamma_e$.}
\end{theorem}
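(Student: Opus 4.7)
The plan is to instantiate the Holtzman-type approximation \eqref{SOP_appro} in the GK model and recognize every resulting piece as a Meijer's $G$-function. After the chain rule is applied to $P(x)=F_{\gamma_d}(\lambda-1+\lambda x)$, equation \eqref{SOP_appro} becomes
\[
P_{\rm sop}\simeq F_{\gamma_d}(g)+\tfrac{1}{2}\sigma_e^{\,2}\lambda^{2}\,F''_{\gamma_d}(g),\qquad g:=\lambda-1+\lambda\overline{\gamma}_e,
\]
so the task decomposes into three essentially independent ingredients: (i) a $G$-representation of $F_{\gamma_d}$, (ii) the variance $\sigma_e^{\,2}$, and (iii) the second derivative $F''_{\gamma_d}(g)$.

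For (i), I would use the standard GK CDF representation $F_{\gamma_d}(g)=\frac{1}{\Gamma(k_d)\Gamma(m_d)}G_{1,3}^{2,1}(k_d m_d g/\overline{\gamma}_d\,|\,{}^{1}_{k_d,m_d,0})$, obtained by writing the modified-Bessel factor of the GK PDF as a $G$-function via the Bessel-to-$G$ identity of \cite{Gradshteyn} and integrating term by term; this reproduces the first summand of \eqref{SOP_final} verbatim. For (ii), the standard GK moments $\mathbb{E}[\gamma_e]=\overline{\gamma}_e$ and $\mathbb{E}[\gamma_e^{\,2}]=\overline{\gamma}_e^{\,2}(k_e+1)(m_e+1)/(k_e m_e)$ (which follow, e.g., from the product-of-independent-Gammas representation of GK or from the Mellin transform of its PDF) immediately give $\sigma_e^{\,2}=\overline{\gamma}_e^{\,2}\bigl[(k_e+1)(m_e+1)/(k_e m_e)-1\bigr]$, matching the scalar bracket in the second summand.

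For (iii), I would apply the standard differentiation rule for $\frac{d^{2}}{dz^{2}}G^{m,n}_{p,q}(z\,|\,\cdots)$ (Gradshteyn--Ryzhik 9.31) to the $G$-representation of $F_{\gamma_d}$: this yields a single $G_{2,4}^{2,2}$ whose upper list is $(0,1)$ and whose lower list is $(k_d,m_d,0,2)$, multiplied by a $z^{-2}$ factor. Combining this $z^{-2}$ with the inner chain-rule factor $(k_d m_d/\overline{\gamma}_d)^{2}$ arising from $z=k_d m_d g/\overline{\gamma}_d$ leaves exactly $g^{-2}$ in the denominator; the outer $\lambda^{2}$ from the $P(x)=F_{\gamma_d}(\lambda-1+\lambda x)$ chain rule and the factor $\sigma_e^{\,2}/2$ then assemble into the second summand of \eqref{SOP_final}.

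The main obstacle is the parameter bookkeeping in step (iii): verifying that the $k_d m_d/\overline{\gamma}_d$ scalings cancel against $z^{-2}$ to give a clean $g^{-2}$, and that the shifted parameter lists read exactly $(0,1)$ and $(k_d,m_d,0,2)$, is the only nontrivial calculation; steps (i) and (ii) are standard substitutions. The ``tight for small $\overline{\gamma}_e$'' claim is inherited from the Holtzman error analysis for \eqref{P_appro}: the neglected terms involve central moments $\mathbb{E}[(\gamma_e-\overline{\gamma}_e)^{n}]$ with $n\ge 3$, which for the GK distribution scale as $O(\overline{\gamma}_e^{\,n})$ and therefore vanish relative to the retained $O(\overline{\gamma}_e^{\,2})$ correction as $\overline{\gamma}_e\to 0$.
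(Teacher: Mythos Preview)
Your proposal is correct and follows essentially the same route as the paper: quote the GK CDF as a $G_{1,3}^{2,1}$, compute $\sigma_e^{2}$ from the GK second moment, and obtain $P^{(2)}(\overline\gamma_e)$ via the Meijer-$G$ differentiation identity $z^{n}\partial_z^{\,n}G_{p,q}^{m,n}(z)=G_{p+1,q+1}^{m,n+1}(z\,|\,{}^{0,\mathbf a_p}_{\mathbf b_q,n})$, then substitute into \eqref{SOP_appro}. The only cosmetic difference is that the paper differentiates $P(\overline\gamma_e)$ directly using that identity (so the $\lambda^{n}$ and $(\lambda-1+\lambda\overline\gamma_e)^{-n}$ appear in one step), whereas you factor the computation through the chain rule and then verify the $(k_d m_d/\overline\gamma_d)^{2}$ cancellation against $z^{-2}$; the resulting expression is identical.
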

\begin{proof}
The CDF of $\gamma_t$ ($\gamma_t \in \{\gamma_d,\gamma_e\}$)  over GK fading channels can be written in terms of the Meijer's G-function as \cite{Lei_TVT}
\begin{align}\label{gamma_d_CDF}
{F_{{\gamma _t}}}\left( {{\gamma _t}} \right) = \frac{{G_{1,3}^{2,1}\left( {\frac{{{k_t}{m_t}{\gamma _t}}}{{{{\overline \gamma  }_t}}}\left| {_{{k_t},{m_t},0}^1} \right.} \right)}}{{\Gamma \left( {{k_t}} \right)\Gamma \left( {{m_t}} \right)}}.
\end{align}
The corresponding $P(\overline\gamma_e)$ in \eqref{SOP_appro} becomes
\begin{align}
P\left( {{\overline \gamma _e}} \right) = \frac{{G_{1,3}^{2,1}\left( {\frac{{{k_d}{m_d}\left( {\lambda  - 1 + \lambda {\overline \gamma _e}} \right)}}{{{{\overline \gamma  }_d}}}\left| {_{{k_d},{m_d},0}^1} \right.} \right)}}{{\Gamma \left( {{k_d}} \right)\Gamma \left( {{m_d}} \right)}}.
\end{align}
To derive the $n$-th ($n=0,1,2,\cdots$) order derivative of $P(\overline \gamma_e)$, we can employ the $n$-th order derivative property of Meijer's G-function, given by \cite{Wolfram}
\begin{align}
{z^n}\frac{{{\partial ^n}}}{{\partial {z^n}}}G_{p,q}^{m,n}\left( {z\left| {_{{{\bf{b}}_q}}^{{{\bf{a}}_p}}} \right.} \right) = G_{p + 1,q + 1}^{m,n + 1}\left( {z\left| {_{{{\bf{b}}_q},n}^{0,{{\bf{a}}_p}}} \right.} \right),
\end{align}
where ${\bf a}_p$ and ${\bf b}_q$ represent the parameter vectors of Meijer's G-function, respectively.
By using this derivative identity, the $n$-th derivative of $P(\overline \gamma_e)$ can be easily derived as
\begin{align}\label{P_derivative}
&P^{(n)}(\overline\gamma_e) = \frac{\lambda ^nG_{2,4}^{2,2}\left( {\frac{{{k_d}{m_d}\left( {\lambda  - 1 + \lambda {\overline\gamma _e}} \right)}}{{{{\overline \gamma  }_d}}}\left| {_{{k_d},{m_d},0,n}^{0,1}} \right.} \right)}{{\Gamma \left( {{k_d}} \right)\Gamma \left( {{m_d}} \right){{\left( {\lambda  - 1 + \lambda {\gamma _e}} \right)}^n}}}.
\end{align}

The $n$-th moment function of $\gamma_t$ over GK fading channels is given by (5) in \cite{Bithas}
\begin{align}
\mathbb{E}\left\{ {\gamma _t^n} \right\} = \frac{{\Gamma \left( {{k_t} + n} \right)\Gamma \left( {{m_t} + n} \right)}}{{\Gamma \left( {{k_t}} \right)\Gamma \left( {{m_t}} \right)}}{\left( {\frac{{{{\overline \gamma  }_t}}}{{{k_t}{m_t}}}} \right)^n}.
\end{align}
The variance of $\gamma_e$ can be easily derived by using this moment function, given by
\begin{align}\label{moment_E}
\sigma _e^2 = \mathbb{E}\left\{ {\gamma _e^2} \right\} - {\mathbb{E}^2}\left\{ {{\gamma _e}} \right\}
 = \left[ {\frac{{\left( {{k_e} + 1} \right)\left( {{m_e} + 1} \right)}}{{{k_e}{m_e}}} - 1} \right]\overline \gamma  _e^2,
\end{align}
where a large $\overline \gamma_e$ means a large variance of $\gamma_e$.

Substituting the derived derivative of $P(\overline \gamma_e)$ and variance of $\gamma_e$ into \eqref{SOP_appro} yields \eqref{SOP_final}.
\end{proof}

\begin{remark}
The SOP expression shown in \emph{Theorem 1} provides a simple approximation compared to the one in \cite{Kang_WCL} where the SOP expression involves the infinite univariate Meijer's G-function summation. Unlike the SOP expression in \cite{Lei_CL} valid only for integer $m_d$ and $m_e$, our derived SOP expression can be used for arbitrary positive $m_d$ and $m_e$.
\end{remark}

\begin{coro}
For $k_d=k_e \to \infty$ and $m_d=m_e=1$, the GK fading model is reduced to the Rayleigh fading model.   The approximate SOP over Rayleigh fading channels is
\begin{align}\label{SOP_Rayleigh}
P_{\rm sop} = 1 - \left[ {1 - \frac{1}{2}{{\left( {\frac{{{{\overline \gamma  }_e}\lambda }}{{{{\overline \gamma  }_d}}}} \right)}^2}} \right]\exp \left( { - \frac{{\lambda  - 1 + \lambda {{\overline \gamma  }_e}}}{{{{\overline \gamma  }_d}}}} \right).
\end{align}
\end{coro}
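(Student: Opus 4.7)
The plan is to bypass the Meijer $G$-function machinery of Theorem~1 and specialize the generic approximation \eqref{SOP_appro} directly, since under the stated limits the GK CDF collapses to the Rayleigh CDF and the variance formula simplifies drastically. As $k_t \to \infty$ the large-scale (shadowing) component of the GK channel degenerates to a deterministic constant, while $m_t = 1$ reduces the Nakagami-$m$ component to Rayleigh, giving $F_{\gamma_t}(x) = 1 - \exp(-x/\overline\gamma_t)$ for $t \in \{d,e\}$. I would simply invoke this well-known limit (or equivalently note that the Meijer $G$-function in \eqref{gamma_d_CDF} reduces to the exponential CDF at these parameter values) and work from \eqref{SOP_appro}, rather than trying to evaluate the two Meijer $G$-functions appearing in \eqref{SOP_final} in an indeterminate form.

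With the CDF in closed exponential form, the three ingredients required by \eqref{SOP_appro} are immediate. First, $P(\overline\gamma_e) = F_{\gamma_d}(\lambda - 1 + \lambda \overline\gamma_e) = 1 - \exp(-(\lambda - 1 + \lambda \overline\gamma_e)/\overline\gamma_d)$. Second, a direct double differentiation in $\overline\gamma_e$ using the chain rule with $d(\lambda - 1 + \lambda \overline\gamma_e)/d\overline\gamma_e = \lambda$ yields $P^{(2)}(\overline\gamma_e) = -(\lambda/\overline\gamma_d)^2 \exp(-(\lambda - 1 + \lambda \overline\gamma_e)/\overline\gamma_d)$. Third, the variance expression \eqref{moment_E} evaluated at $m_e = 1$ and $k_e \to \infty$ gives $(k_e+1)(m_e+1)/(k_e m_e) - 1 \to 2\cdot 1 - 1 = 1$, so $\sigma_e^2 = \overline\gamma_e^{\,2}$.

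Substituting these three quantities into \eqref{SOP_appro} and factoring out the common exponential produces the closed-form expression in \eqref{SOP_Rayleigh}. There is essentially no analytic obstacle here, as the corollary is a direct specialization; the only care needed is bookkeeping---tracking the sign of $P^{(2)}$, which is negative because the Rayleigh CDF is concave in its argument, and then collecting the two resulting terms under the single bracketed factor multiplying $\exp(-(\lambda - 1 + \lambda \overline\gamma_e)/\overline\gamma_d)$ so that the final form matches \eqref{SOP_Rayleigh}.
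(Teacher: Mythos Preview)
Your route is sound and slightly different from the paper's one-line proof: the paper substitutes $m_d=m_e=1$, $k_d=k_e\to\infty$ directly into the Meijer $G$-function expression \eqref{SOP_final}, which in principle requires taking limits of those special functions, whereas you specialize the underlying approximation \eqref{SOP_appro} with the explicit Rayleigh CDF $F_{\gamma_d}(x)=1-e^{-x/\overline\gamma_d}$ and differentiate by hand. Both are legitimate; yours is more elementary and sidesteps any indeterminate-form bookkeeping in the Meijer $G$ limits, while the paper's version has the virtue of showing that \eqref{SOP_Rayleigh} is literally a special case of \eqref{SOP_final}.

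One caution on the final ``collecting terms'' step. Your ingredients are correct: $P(\overline\gamma_e)=1-\exp\bigl(-(\lambda-1+\lambda\overline\gamma_e)/\overline\gamma_d\bigr)$, $P^{(2)}(\overline\gamma_e)=-(\lambda/\overline\gamma_d)^2\exp\bigl(-(\lambda-1+\lambda\overline\gamma_e)/\overline\gamma_d\bigr)<0$, and $\sigma_e^2=\overline\gamma_e^{\,2}$. Plugging these into \eqref{SOP_appro} and factoring out the exponential gives
\[
P_{\rm sop}\;\simeq\;1-\Bigl[\,1+\tfrac{1}{2}\bigl(\overline\gamma_e\lambda/\overline\gamma_d\bigr)^{2}\Bigr]\exp\!\Bigl(-\tfrac{\lambda-1+\lambda\overline\gamma_e}{\overline\gamma_d}\Bigr),
\]
with a \emph{plus} sign inside the bracket, not the minus sign printed in \eqref{SOP_Rayleigh}. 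This is forced by $\sigma_e^2>0$ and $P^{(2)}<0$ (concave CDF), so the second-order correction is negative and, after factoring $-e^{-(\cdot)}$, contributes $+\tfrac12(\cdot)^2$ inside the bracket. The discrepancy is a sign slip in the displayed formula \eqref{SOP_Rayleigh}, not in your argument; do not force your derivation to reproduce it.
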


\begin{coro}
When $k_d=k_e \to \infty$, i.e., no shadowing, the GK model becomes the Nakagami-$m$ fading model. The SOP over Nakagami-$m$ fading channels is
\begin{align}\label{SOP_Naka}
{P_{{\rm{sop}}}} =& \frac{{\Upsilon \left( {{m_d},\frac{{{m_d}\left( {\lambda  - 1 + \lambda {{\overline \gamma  }_e}} \right)}}{{{{\overline \gamma }_d}}}} \right)}}{{\Gamma \left( {{m_d}} \right)}} + \overline \gamma  _e^2{\lambda ^2} \notag\\
&\frac{{\left( {\frac{{{m_e} + 1}}{{{m_e}}} - 1} \right)G_{2,3}^{1,2}\left( {\frac{{{m_d}\left( {\lambda  - 1 + \lambda {{\overline \gamma  }_e}} \right)}}{{{{\overline \gamma }_d}}}\left| {_{{m_d},0,2}^{0,1}} \right.} \right)}}{{2\Gamma \left( {{m_d}} \right){{\left( {\lambda  - 1 + \lambda {\overline\gamma _e}} \right)}^2}}},
\end{align}
where $\Upsilon(\cdot,\cdot)$ denotes the lower incomplete Gamma function \cite{Gradshteyn}.
\end{coro}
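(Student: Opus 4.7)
The approach is to specialize the derivation of Theorem 1 to the Nakagami-$m$ case. One could in principle pass to the limit $k_d, k_e \to \infty$ directly inside \eqref{SOP_final}, but justifying the limit of the Meijer's G-functions $G_{1,3}^{2,1}$ and $G_{2,4}^{2,2}$ as the shadowing parameters tend to infinity is awkward. It is cleaner to re-run the three-step argument used in the proof of Theorem 1, since the approximation framework \eqref{SOP_appro} depends only on $F_{\gamma_d}(\cdot)$, $\bar\gamma_e$, and $\sigma_e^2$, each of which has a transparent no-shadowing limit.

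First, I would use the fact that in the no-shadowing limit $\gamma_t$ is Gamma-distributed with shape $m_t$ and scale $\bar\gamma_t/m_t$, so $F_{\gamma_t}(\gamma) = \Upsilon(m_t, m_t\gamma/\bar\gamma_t)/\Gamma(m_t)$. Substituting $\gamma = \lambda - 1 + \lambda\bar\gamma_e$ yields $P(\bar\gamma_e)$, which is precisely the first term on the right-hand side of \eqref{SOP_Naka}. Second, to compute $P^{(2)}(\bar\gamma_e)$ I would re-express the same CDF as a Meijer's G-function of order $(1,1;1,2)$ and apply the derivative identity used in the proof of Theorem 1 with differentiation order two. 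The chain rule in $\bar\gamma_e$ contributes an additional factor $\lambda^2$ and replaces $G_{1,2}^{1,1}$ by $G_{2,3}^{1,2}$, with the new parameters $0$ and $2$ entering the upper-left and lower-right positions, which matches the Meijer factor in \eqref{SOP_Naka}. Third, the moment expression \eqref{moment_E} in the limit $k_e \to \infty$ reduces to $\sigma_e^2 = ((m_e+1)/m_e - 1)\bar\gamma_e^2$, the exact prefactor of the second term of \eqref{SOP_Naka}. Substituting these three pieces into \eqref{SOP_appro} then produces \eqref{SOP_Naka}.

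The main obstacle, though relatively minor, is the bookkeeping of Meijer G-function parameter vectors when the derivative identity is applied to the lower-order object $G_{1,2}^{1,1}$ rather than the $G_{1,3}^{2,1}$ used in Theorem 1. One must verify that the upper-left slot receives the new parameter $0$, that the lower-right slot receives the derivative order $2$, and that the remaining lower parameters $m_d$ and $0$ stay in the correct positions so as to match \eqref{SOP_Naka}; everything else is a routine specialization of the Theorem 1 calculation.
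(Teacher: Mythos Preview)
Your argument is correct, but it is genuinely different in route from the paper's. The paper's own proof is a one-liner: it simply substitutes $k_d=k_e\to\infty$ directly into \eqref{SOP_final} and declares the result straightforward, without discussing how the Meijer's G-functions $G_{1,3}^{2,1}$ and $G_{2,4}^{2,2}$ degenerate in that limit. You instead re-run the Theorem~1 machinery with the Nakagami-$m$ (Gamma) CDF inserted from the outset, obtaining $P(\overline\gamma_e)$ as a regularized lower incomplete Gamma, $P^{(2)}(\overline\gamma_e)$ via the derivative identity applied to the order-$(1,1;1,2)$ Meijer representation of that CDF, and $\sigma_e^2$ by letting $k_e\to\infty$ in \eqref{moment_E}. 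What the paper's approach buys is brevity; what your approach buys is transparency and rigor, since you never have to justify a limit of a Meijer's G-function in a diverging parameter---which, as you note, is awkward to do cleanly. Your bookkeeping on the parameter vectors (new $0$ upper-left, derivative order $2$ lower-right, with $m_d,0$ untouched) is exactly right and matches \eqref{SOP_Naka}.
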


\begin{proof}
The proof of \emph{Corollary 3.1} (or \emph{Corollary 3.2}) is straightforward by substituting $m_d=m_e=1$ and $k_d=k_e \to \infty$ (or $k_d=k_e \to \infty$) into \eqref{SOP_final}.
\end{proof}

In the Nakagami-$m$ fading case, our SOP approximation, shown in \emph{Corollary 3.2}, is much concise compared to (34) in \cite{Juan_TCOM}, where the SOP expression is composed by multiple summation terms. In addition, the SOP approximation in \cite{Juan_TCOM} is valid only for integer $m$.

\section{Asymptotic Analysis}
In this section, the asymptotic expression for SOP will be presented over GK fading channels, according to two cases, i.e., $k_d \neq m_d$ and $k_d=m_d$, where $k_d>0$ and $m_d>0$.
\begin{lemma}
\emph{When $\overline \gamma_d \to \infty$ and $\overline \gamma_e$ is finite, the ASOP for $k_d \neq m_d$ over GK fading channels is given by}
\begin{align}\label{ASOP_mnk}
P_{{\rm{sop}}}^\infty  =& \overline \gamma  _d^{ - v}\frac{{\Gamma \left( {\left| {{k_d} - {m_d}} \right|} \right){{\left( {{k_d}{m_d}} \right)}^v}{{\left( {\lambda  - 1 + \lambda {{\overline \gamma  }_e}} \right)}^v}}}{{\Gamma \left( {{k_d}} \right)\Gamma \left( {{m_d}} \right)}} \notag\\
&\left( {\frac{1}{v} + \frac{{\left( {\frac{{\left( {{k_e} + 1} \right)\left( {{m_e} + 1} \right)}}{{{k_e}{m_e}}} - 1} \right)\left( {v - 1} \right)\overline \gamma  _e^2{\lambda ^2}}}{{{{2\left( {\lambda  - 1 + \lambda {{\overline \gamma  }_e}} \right)}^2}}}} \right),
\end{align}
\emph{where $v=\min\{k_d,m_d\}$, i.e., $v$ is the minimum between $k_d$ and $m_d$. It is obvious that the secrecy diversity order is $\min\{k_d,m_d\}$.}
\end{lemma}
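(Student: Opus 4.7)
The plan is to let $\overline{\gamma}_d \to \infty$ with $\overline{\gamma}_e$ held fixed in the SOP formula \eqref{SOP_final}, so that the common argument $z_d := k_d m_d(\lambda - 1 + \lambda \overline{\gamma}_e)/\overline{\gamma}_d$ of the two Meijer's G-functions appearing there tends to zero. The proof then reduces to finding the leading small-$z_d$ asymptotics of $G_{1,3}^{2,1}(z_d\,|_{k_d,m_d,0}^{1})$ and $G_{2,4}^{2,2}(z_d\,|_{k_d,m_d,0,2}^{0,1})$, substituting them back together with $\sigma_e^2$ from \eqref{moment_E}, and simplifying.

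For the asymptotics I would write each G-function via its Mellin--Barnes contour integral and close the contour to the left. The relevant leftmost poles come from the factors $\Gamma(k_d+s)$ and $\Gamma(m_d+s)$ in the integrand, located at $s=-k_d$ and $s=-m_d$. Under the hypothesis $k_d\neq m_d$ these two poles are simple and distinct, and the rightmost one, $s=-v$ with $v=\min\{k_d,m_d\}$, dominates as $z_d\to 0$, producing a leading term proportional to $z_d^{\,v}$. Using $\operatorname{Res}_{s=-v}\Gamma(v+s)=1$ together with the elementary identities $\Gamma(-s)/\Gamma(1-s)=-1/s$ for the first G-function and $\Gamma(-s)/\Gamma(-1-s)=-1-s$ for the second, the two leading residues come out in closed form, with the surviving gamma factor producing $\Gamma(|k_d-m_d|)$. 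Concretely, I expect the first G-function to behave like $\Gamma(|k_d-m_d|)z_d^{\,v}/v$ and the second like $(v-1)\Gamma(|k_d-m_d|)z_d^{\,v}$, which already exposes the factors $1/v$ and $(v-1)$ visible inside the bracket of \eqref{ASOP_mnk}.

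Plugging these leading terms back into \eqref{SOP_final}, inserting $\sigma_e^2$ from \eqref{moment_E}, and factoring out the common prefactor $\overline{\gamma}_d^{-v}(k_d m_d)^v(\lambda-1+\lambda\overline{\gamma}_e)^v\Gamma(|k_d-m_d|)/[\Gamma(k_d)\Gamma(m_d)]$ is then routine algebra and yields \eqref{ASOP_mnk} directly. Reading off the $\overline{\gamma}_d^{-v}$ dependence gives the claimed secrecy diversity order $v=\min\{k_d,m_d\}$. The main obstacle is the residue bookkeeping for the $G_{2,4}^{2,2}$ term: one has to identify the cancellation of $\Gamma(1-s)$ in the numerator against $\Gamma(1-s)$ in the denominator of its Mellin--Barnes integrand, so that only the factor $-1-s$ remains, which evaluated at $s=-v$ produces exactly the $(v-1)$ that appears in \eqref{ASOP_mnk}. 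It is also essential to verify symmetry under $k_d\leftrightarrow m_d$ so that the two subcases $v=k_d<m_d$ and $v=m_d<k_d$ lead to the same expression; this becomes manifest once the final answer is written in terms of $|k_d-m_d|$ and $\min\{k_d,m_d\}$, and it is this symmetry that fails precisely when $k_d=m_d$, explaining why that confluent case must be treated separately.
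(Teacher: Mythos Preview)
Your proposal is correct and follows essentially the same route as the paper: both obtain the leading small-argument behaviour of the two Meijer G-functions in \eqref{SOP_final} by writing them as Mellin--Barnes integrals and picking up the residue at the rightmost simple pole $s=-v$ (the paper outsources this step to the asymptotic CDF formula and to the residue computation in \cite{Zhao_CL2}, whereas you spell it out explicitly), and then substitute back together with $\sigma_e^2$ from \eqref{moment_E}. Your identification of the cancellations $\Gamma(-s)/\Gamma(1-s)=-1/s$ and $\Gamma(-s)/\Gamma(-1-s)=-1-s$, which produce the $1/v$ and $(v-1)$ factors in \eqref{ASOP_mnk}, is exactly the mechanism behind the paper's stated asymptotics.
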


\begin{proof}
When $\overline \gamma_d \to \infty$ and $k_d \neq m_d$, the CDF of $\gamma_d$ can be approximated as \cite{Lateef,Zhao_CL}
\begin{align}
F_{{\gamma _d}}^\infty \left( x \right) = \frac{{\Gamma \left( {\left| {{k_d} - {m_d}} \right|} \right){{\left( {{k_d}{m_d}x} \right)}^v}}}{{\Gamma \left( {{k_d}} \right)\Gamma \left( {{m_d}} \right)v}}\overline \gamma  _d^{ - v},
\end{align}
The asymptotic result for the Meijer's G-function in \eqref{P_derivative} for $n=2$ can be approximated by using the series expansion up to the first non-zero order term,
\begin{align}\label{MeijerG_appro}
&G_{2,4}^{2,2}\left( {\frac{{{k_d}{m_d}\left( {\lambda  - 1 + \lambda {{\overline \gamma  }_e}} \right)}}{{{{\overline \gamma  }_d}}}\left| {_{{k_d},{m_d},0,2}^{0,1}} \right.} \right) \notag\\
& \simeq \left( {v - 1} \right){\left( {{k_d}{m_d}} \right)^v}{\left( {\lambda  - 1 + \lambda {{\overline \gamma  }_e}} \right)^v}\Gamma \left( {\left| {{k_d} - {m_d}} \right|} \right)\overline \gamma  _d^{ - v},
\end{align}
which can be easily derived by following (14)-(16) in \cite{Zhao_CL2} by rewriting the Meijer's G-function into the integral form and computing the residue at the double pole, i.e., the leading term in high SNRs.
We can derive \eqref{ASOP_mnk} by using the asymptotic CDF of $\gamma_d$ for $k_d \neq m_d$ and the asymptotic expression for $P^{(2)}(\cdot)$.
\end{proof}

\begin{prop}
The asymptotic CDF of $\gamma_d$ for $k_d=m_d$ and $\overline \gamma_d \to \infty$ is
\begin{align}
F_{{\gamma _d}}^\infty \left( {{\gamma _d}} \right) = \frac{{\psi \left( {{m_d} + 1} \right) + 2\psi \left( 1 \right) - \psi \left( {{m_d}} \right) - \ln \left( {\frac{{m_d^2{\gamma _d}}}{{{{\overline \gamma  }_d}}}} \right)}}{{m_d^{ - 2{m_d} + 1}{\Gamma ^2}\left( {{m_d}} \right)\gamma _d^{ - {m_d}}\overline \gamma  _d^{{m_d}}}},
\end{align}
where $\psi(\cdot)$ denotes the digamma function \cite{Gradshteyn}.
\end{prop}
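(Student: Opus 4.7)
The plan is to sidestep working directly with the Meijer $G$ representation in \eqref{gamma_d_CDF} and instead return to the exact GK PDF, because setting $k_d = m_d$ collapses the order of the Bessel function $K_{k_d - m_d}$ to zero. The PDF then reduces to $f_{\gamma_d}(\gamma) = \frac{2(m_d^2/\overline{\gamma}_d)^{m_d}}{\Gamma^2(m_d)}\,\gamma^{m_d-1}\,K_0\!\left(2\sqrt{m_d^2\gamma/\overline{\gamma}_d}\right)$, and the hypothesis $\overline{\gamma}_d \to \infty$ makes the $K_0$ argument small on the interval relevant to the CDF. This reduction is the key simplification that the general $k_d \neq m_d$ case (handled in Lemma 1) cannot exploit.

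I would then substitute the classical small-argument expansion $K_0(x) = -\ln(x/2) + \psi(1) + O(x^2 \ln x)$ and integrate over $[0,\gamma_d]$. This reduces the task to the two elementary integrals $\int_0^{\gamma_d} t^{m_d-1}\,dt$ and $\int_0^{\gamma_d} t^{m_d-1}\ln t\,dt$; the latter is evaluated by a single integration by parts and equals $\gamma_d^{m_d}(\ln\gamma_d)/m_d - \gamma_d^{m_d}/m_d^2$. Assembling the pieces produces a $-\ln(m_d^2\gamma_d/\overline{\gamma}_d)$ term from the $\ln(x/2)$ contribution together with a bracketed constant $1/m_d + 2\psi(1)$, all multiplying the prefactor $(m_d^2\gamma_d/\overline{\gamma}_d)^{m_d}/(m_d\,\Gamma^2(m_d))$. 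The last step is cosmetic: by the digamma recurrence $\psi(x+1) - \psi(x) = 1/x$, the constant $1/m_d$ equals $\psi(m_d+1) - \psi(m_d)$, so the bracket is exactly $\psi(m_d+1) + 2\psi(1) - \psi(m_d)$; regrouping the prefactor into $m_d^{-2m_d+1}\Gamma^2(m_d)\gamma_d^{-m_d}\overline{\gamma}_d^{m_d}$ then reproduces the stated form.

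The main obstacle, relative to the proof of Lemma 1, is precisely the logarithmic singularity of $K_0$. For $k_d \neq m_d$ the Bessel function $K_{k_d-m_d}$ behaves like a pure power near zero and Lemma 1 therefore yields a monomial asymptotic; at the coincident case the order-zero Bessel function generates the $\ln(m_d^2\gamma_d/\overline{\gamma}_d)$ term and the digamma-valued constant simultaneously, so the bookkeeping must keep both alive without dropping either. An equivalent but more formal route, paralleling the residue argument cited in the proof of Lemma 1, is to observe that at $k_d = m_d$ the two simple poles of the Mellin--Barnes integrand at $s = k_d$ and $s = m_d$ coalesce into a double pole at $s = m_d$, whose residue is extracted from the Laurent expansion of $[\Gamma(m_d-s)]^2/\Gamma(1+s)$ about that point and naturally delivers both the logarithm and the $\psi$-valued constant. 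I prefer the PDF route as self-contained, but either path arrives at the same expression.
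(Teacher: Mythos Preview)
Your proposal is correct. Your primary route---return to the exact PDF with $K_0$, apply the small-argument expansion $K_0(x)\simeq -\ln(x/2)+\psi(1)$, integrate the two elementary pieces, and then rewrite $1/m_d=\psi(m_d+1)-\psi(m_d)$---is genuinely different from what the paper does. The paper's proof simply points back to the Mellin--Barnes residue machinery already used for \eqref{MeijerG_appro}: write the CDF's Meijer $G$ representation \eqref{gamma_d_CDF} as a contour integral and pick up the residue at the double pole $s=m_d$ created by the coalescence $k_d=m_d$. You in fact sketch exactly this alternative in your final paragraph, so you already have both arguments in hand. Your Bessel-function route is more elementary and self-contained (no contour integrals or external residue formulas; just a textbook $K_0$ expansion and two calculus integrals), and it makes the simultaneous appearance of the logarithm and the digamma constant completely transparent. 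The paper's residue route buys uniformity with the rest of Section~IV---the same mechanism handles the $P^{(2)}$ asymptotic in Lemma~2 and would extend mechanically to other Meijer $G$ expressions---at the price of outsourcing the actual computation to the cited reference.
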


\begin{proof}
The proof is similar to \eqref{MeijerG_appro} by referring to (14)-(16) in \cite{Zhao_CL2}.
\end{proof}

\begin{remark}
By using the derived asymptotic CDF of $\gamma_d$ for $k_d=m_d$ in \emph{Proposition 4.1} and the diversity order definition, the diversity order can be derived by
\begin{align}
 &- \mathop {\lim }\limits_{{{\overline \gamma  }_d} \to \infty } \frac{{\ln F_{{\gamma _d}}^\infty \left( {{\gamma _d}} \right)}}{{\ln {{\overline \gamma  }_d}}} =m_d,
\end{align}
which shows that the diversity order for $k_d=m_d$ is still $\min\{k_d,m_d\}$.
\end{remark}

\begin{lemma}
\emph{The closed-form expression for the  ASOP for $k_d=m_d$ over GK fading channels is }
\begin{align}
\resizebox{.9\hsize}{!}{$
P_{{\rm{sop}}}^\infty  = \overline \gamma _d^{ - {m_d}}\left( {\frac{{\psi \left( {{m_d} + 1} \right) + 2\psi \left( 1 \right) - \psi \left( {{m_d}} \right) - \ln \left( {\frac{{m_d^2\left( {\lambda  - 1 + \lambda {{\overline \gamma }_e}} \right)}}{{{{\overline\gamma }_d}}}} \right)}}{{m_d^{ - 2{m_d} + 1}{\Gamma ^2}\left( {{m_d}} \right)(\lambda-1+\lambda \overline\gamma_e)^{ - {m_d}}}}} \right. $} \notag
\end{align}
\vspace{-0.5cm}
\begin{align}\label{ASOP_mk}
\resizebox{.9\hsize}{!}{$
+ \left. {\frac{{\left( {\frac{{\left( {{k_e} + 1} \right)\left( {{m_e} + 1} \right)}}{{{k_e}{m_e}}} - 1} \right)\left( {\psi \left( {{m_d} - 1} \right) + 2\psi \left( 1 \right) - \psi \left( {{m_d}} \right) - \ln \frac{{m_d^2\left( {\lambda  - 1 + \lambda {{\overline \gamma }_e}} \right)}}{{{{\overline \gamma }_d}}}} \right)}}{{2m_d^{ - 2{m_d}}{\lambda ^{ - 2}}\overline \gamma  _e^{ - 2}\Gamma \left( {{m_d}} \right)\Gamma \left( {{m_d} - 1} \right){{\left( {\lambda  - 1 + \lambda {{\overline \gamma }_e}} \right)}^{2 - {m_d}}}}}} \right).
$}
\end{align}
\end{lemma}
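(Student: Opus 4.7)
The plan is to recycle the structure of Lemma~1's proof, replacing each asymptotic ingredient by its $k_d=m_d$ counterpart. Starting from the approximation \eqref{SOP_appro}, $P_{\rm sop}\simeq P(\overline\gamma_e)+\tfrac12\sigma_e^2 P^{(2)}(\overline\gamma_e)$, the variance $\sigma_e^2$ is already in hand from \eqref{moment_E}, so only the asymptotic forms of $P(\overline\gamma_e)$ and $P^{(2)}(\overline\gamma_e)$ as $\overline\gamma_d\to\infty$ need to be assembled. For the first piece I would simply evaluate Proposition~4.1 at $\gamma_d=\lambda-1+\lambda\overline\gamma_e$; after pulling out $\overline\gamma_d^{-m_d}$ this immediately reproduces the first logarithmic/digamma block inside the parentheses of \eqref{ASOP_mk}, including the exact combination $\psi(m_d+1)+2\psi(1)-\psi(m_d)-\ln(\cdot)$ and the prefactor $m_d^{2m_d-1}(\lambda-1+\lambda\overline\gamma_e)^{m_d}/\Gamma^2(m_d)$.

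For the second piece I would derive the leading-order asymptotic of the Meijer's $G$-function $G_{2,4}^{2,2}(\cdot|{}^{0,1}_{k_d,m_d,0,2})$ appearing in \eqref{P_derivative} with $n=2$, as its argument $z=k_d m_d(\lambda-1+\lambda\overline\gamma_e)/\overline\gamma_d$ tends to zero. Writing it as a Mellin--Barnes contour integral and simplifying the quotient $\Gamma(1+s)\Gamma(s)/[\Gamma(1+s)\Gamma(s-1)]=s-1$, the integrand reduces to $\Gamma(k_d-s)\Gamma(m_d-s)(s-1)z^s$. For $k_d\neq m_d$ one simply picks up the simple-pole residue at $s=\min(k_d,m_d)$, reproducing \eqref{MeijerG_appro}. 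For $k_d=m_d$ these two poles coalesce into a double pole at $s=m_d$, whose residue $\tfrac{d}{ds}[(s-m_d)^2\Gamma(k_d-s)\Gamma(m_d-s)(s-1)z^s]|_{s=m_d}$ generates a $-\ln z$ contribution plus digamma values coming from the Laurent expansion $\Gamma(-\epsilon)=-1/\epsilon+\psi(1)+O(\epsilon)$, exactly as in (14)--(16) of \cite{Zhao_CL2}. Multiplying the resulting $P^{(2)}(\overline\gamma_e)$ by $\sigma_e^2\lambda^2/[2(\lambda-1+\lambda\overline\gamma_e)^2]$ from \eqref{moment_E} and \eqref{SOP_appro}, and using the identity $(m_d-1)/\Gamma^2(m_d)=1/[\Gamma(m_d)\Gamma(m_d-1)]$, then recovers the second block of \eqref{ASOP_mk}.

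The main obstacle is the bookkeeping of the double-pole residue: the two $\Gamma$-factors collapse into $\Gamma(-\epsilon)^2$ with a $1/\epsilon^2$ singularity, and the residue has to be read off as the coefficient of $\epsilon$ in the Taylor expansion of the product of $\epsilon^2\Gamma(-\epsilon)^2$ with the remaining smooth factor $(s-1)z^s$. The digamma shift that ultimately appears, namely $\psi(m_d-1)$ rather than the $\psi(m_d+1)$ seen in Proposition~4.1, is forced by this extra factor $(s-1)$ evaluated at $s=m_d$ through the identity $1/(m_d-1)=\psi(m_d)-\psi(m_d-1)$; correctly matching this substitution together with the accompanying $\Gamma(m_d-1)$ in the denominator is the delicate step, after which only routine algebraic rearrangement into the format of \eqref{ASOP_mk} remains.
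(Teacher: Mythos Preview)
Your proposal is correct and follows essentially the same route as the paper: plug Proposition~4.1 into the first slot of \eqref{SOP_appro}, expand the Meijer $G$-function in \eqref{P_derivative} at $n=2$ via the double-pole residue at $s=m_d$ (the paper simply cites (14)--(16) of \cite{Zhao_CL2} for this step, whereas you spell out the Mellin--Barnes computation explicitly), and combine with $\sigma_e^2$ from \eqref{moment_E}. One small wording slip: when you write ``multiplying the resulting $P^{(2)}(\overline\gamma_e)$ by $\sigma_e^2\lambda^2/[2(\lambda-1+\lambda\overline\gamma_e)^2]$'' you are double-counting the $\lambda^2/(\lambda-1+\lambda\overline\gamma_e)^2$ prefactor already present in \eqref{P_derivative}; the correct multiplier for $P^{(2)}$ itself is just $\sigma_e^2/2$.
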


\begin{proof}
In the $k_d=m_d$ case, the second derivative of $P(\overline \gamma_e)$ can be written as
\begin{align}\label{P_asy}
&{P^{\left( 2 \right)}}\left( {{{\overline \gamma  }_e}} \right) = \frac{{{\lambda ^2}G_{2,4}^{2,2}\left( {\frac{{m_d^2\left( {\lambda  - 1 + \lambda {{\overline \gamma  }_e}} \right)}}{{{{\overline \gamma  }_d}}}\left| {_{{m_d},{m_d},0,2}^{0,1}} \right.} \right)}}{{{\Gamma ^2}\left( {{m_d}} \right){{\left( {\lambda  - 1 + \lambda {{\overline \gamma  }_e}} \right)}^2}}} \notag\\
&\mathop  \simeq \limits^{\left( a \right)}  \frac{{\psi \left( {{m_d} - 1} \right) + 2\psi \left( 1 \right) - \psi \left( {{m_d}} \right) - \ln \frac{{m_d^2\left( {\lambda  - 1 + \lambda {{\overline \gamma  }_e}} \right)}}{{{{\overline \gamma  }_d}}}}}{{m_d^{ - 2{m_d}}{\lambda ^{ - 2}}\Gamma \left( {{m_d}} \right)\Gamma \left( {{m_d} - 1} \right){{\left( {\lambda  - 1 + \lambda {{\overline \gamma  }_e}} \right)}^{2 - {m_d}}}\overline \gamma  _d^{{m_d}}}},
\end{align}
where $(a)$ follows the series expansion of the Meijer's-G function at $\overline \gamma_d \to \infty$. Specifically, this approximation can be easily obtained by referring to (14)-(16) in \cite{Zhao_CL2}.

Substituting the asymptotic expression for $P^{(2)}(\cdot)$ in \eqref{P_asy} and the CDF of $\gamma_d$ derived in \emph{Proposition 4.1} into \eqref{SOP_appro} yields \eqref{ASOP_mk}.
\end{proof}

\emph{Lemma 2} shows that the ASOP is not a linear function with respect to $\overline \gamma_d$ in dB, because of $\ln \overline \gamma_d$ in \eqref{ASOP_mk}. However, the secrecy diversity order is  $\min\{k_d,m_d\}$, and the slope of ASOP changes very slowly in high SNRs of the main channel with respect to $\ln \overline \gamma_d$.
To best of authors' knowledge, the ASOP expression for $m_d=k_d$ is presented for the first time.

As $\Gamma(m_d-1)$ and $\psi(m_d-1)$ approach to infinity for $m_d=1$ in \eqref{ASOP_mk}, we also give the specific expression for $k_d=m_d=1$, i.e., Rayleigh-Gamma ($K$-distribution) composite fading of the $S-D$ link,
\begin{align}
\resizebox{.9\hsize}{!}{$
P_{{\rm{sop}}}^\infty  = \overline \gamma  _d^{ - 1}\left( {\frac{{\psi \left( 1 \right) + \psi \left( 2 \right) - \ln \left( {\frac{{\lambda  - 1 + \lambda {{\overline \gamma  }_e}}}{{{{\overline \gamma  }_d}}}} \right)}}{{{{\left( {\lambda  - 1 + \lambda {{\overline \gamma  }_e}} \right)}^{ - 1}}}} - \frac{{\overline \gamma  _e^2{\lambda ^2}\left( {\frac{{\left( {{k_e} + 1} \right)\left( {{m_e} + 1} \right)}}{{{k_e}{m_e}}} - 1} \right)}}{{2\left( {\lambda  - 1 + \lambda {{\overline \gamma  }_e}} \right)}}} \right).
$}
\end{align}

\section{Numerical Results}
In this section, we use the Monte-Carlo simulation to validate the high accuracy of our proposed SOP and ASOP expressions, i.e., \eqref{SOP_final}, \eqref{ASOP_mnk} and \eqref{ASOP_mk}.

In Figs. 1-3, the improving trend of SOP is obvious with increasing $\overline \gamma_d$ (or $m_d$ or $k_d$), because of the improved average main channel state (or the increasing multi-path or lighter shadowing). As shown in Fig. 1, apart from a growing SOP for a large $\overline \gamma_e$ due to the improved wiretap channel, we can easily see that the deviation between the simulation and our proposed SOP approximation results (i.e., \eqref{SOP_final})  converges with decreasing $\overline \gamma_e$, i.e., smaller variance of $\gamma_e$, in the medium $\overline \gamma_d$ region. Generally, the approximate SOP results match the simulation results very well even for a large $\overline \gamma_e$ (for $\overline \gamma_e=15$ dB, the corresponding $\sigma_e^2$ is 1100). Figs. 1-3 also shows that when $\overline \gamma_d$ is sufficiently large, the difference among three SOP results (i.e., simulation, approximation and asymptotic results) almost vanishes, which is valid for any value of $\overline \gamma_e$.

\begin{figure}[!htb]
\setlength{\abovecaptionskip}{0pt}
\setlength{\belowcaptionskip}{10pt}
\centering
\includegraphics[width= 3 in]{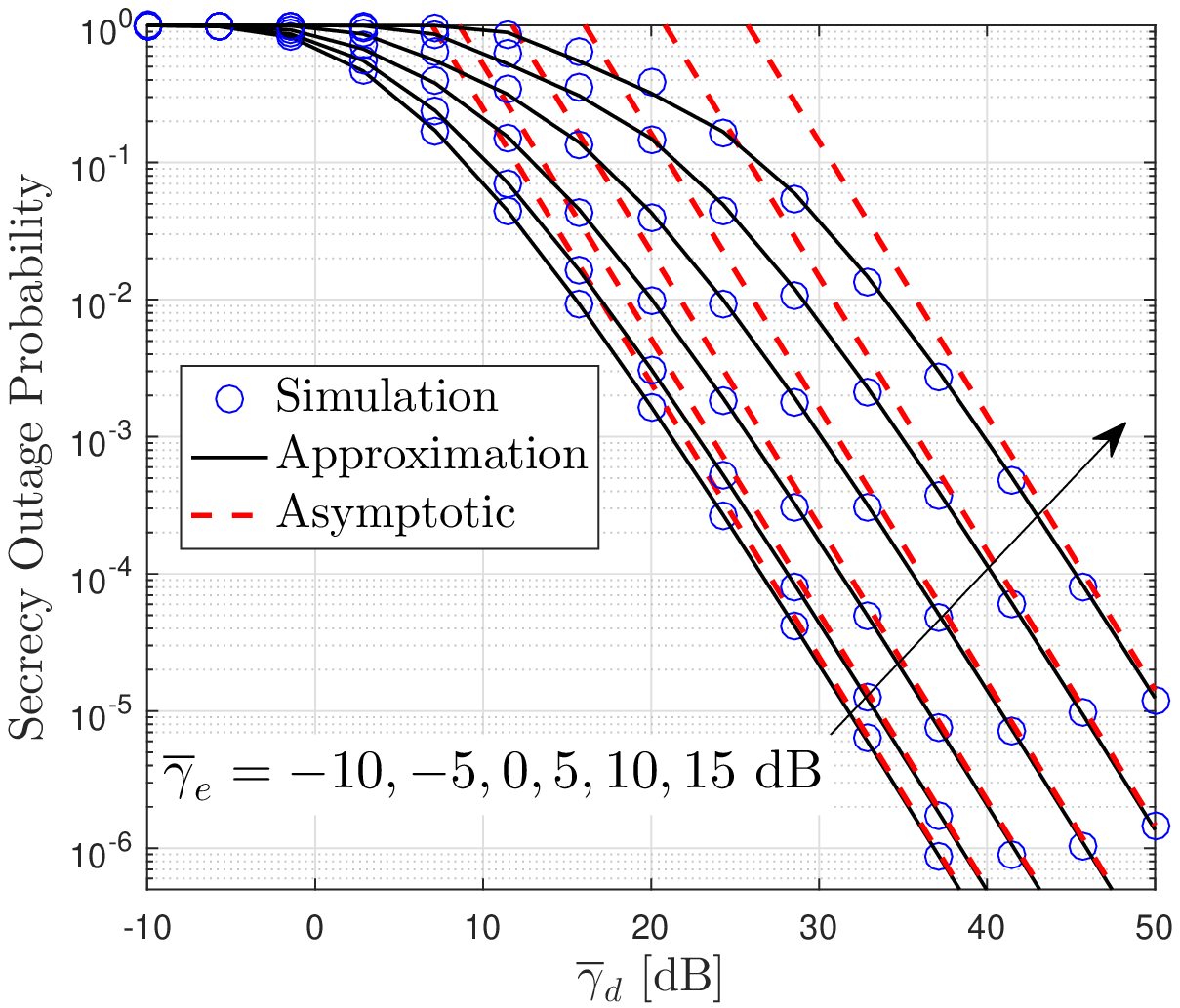}
\caption{$P_{\rm sop}$ versus $\overline \gamma_d$ for $m_d=m_e=2.5$, $k_d=k_e=2$, and $R_s=1$.}
\label{P_D_muH}
\end{figure}

From Fig. 2, for $k_d=1.5$, the slope of ASOP changes for $m_d=0.5, 1$, and becomes constant for $m_d=2, 2.5$, which shows that the secrecy diversity order is $\min\{k_d,m_d\}$. Fig. 3 plots the SOP versus $\overline \gamma_d$ for different $k_d=m_d$ values. The ASOP is not a linear function with respect to $\overline \gamma_d$ in dB, while the slope of ASOP changes very slowly in the high $\overline \gamma_d$ region.

\begin{figure}[!htb]
\setlength{\abovecaptionskip}{0pt}
\setlength{\belowcaptionskip}{10pt}
\centering
\includegraphics[width= 3 in]{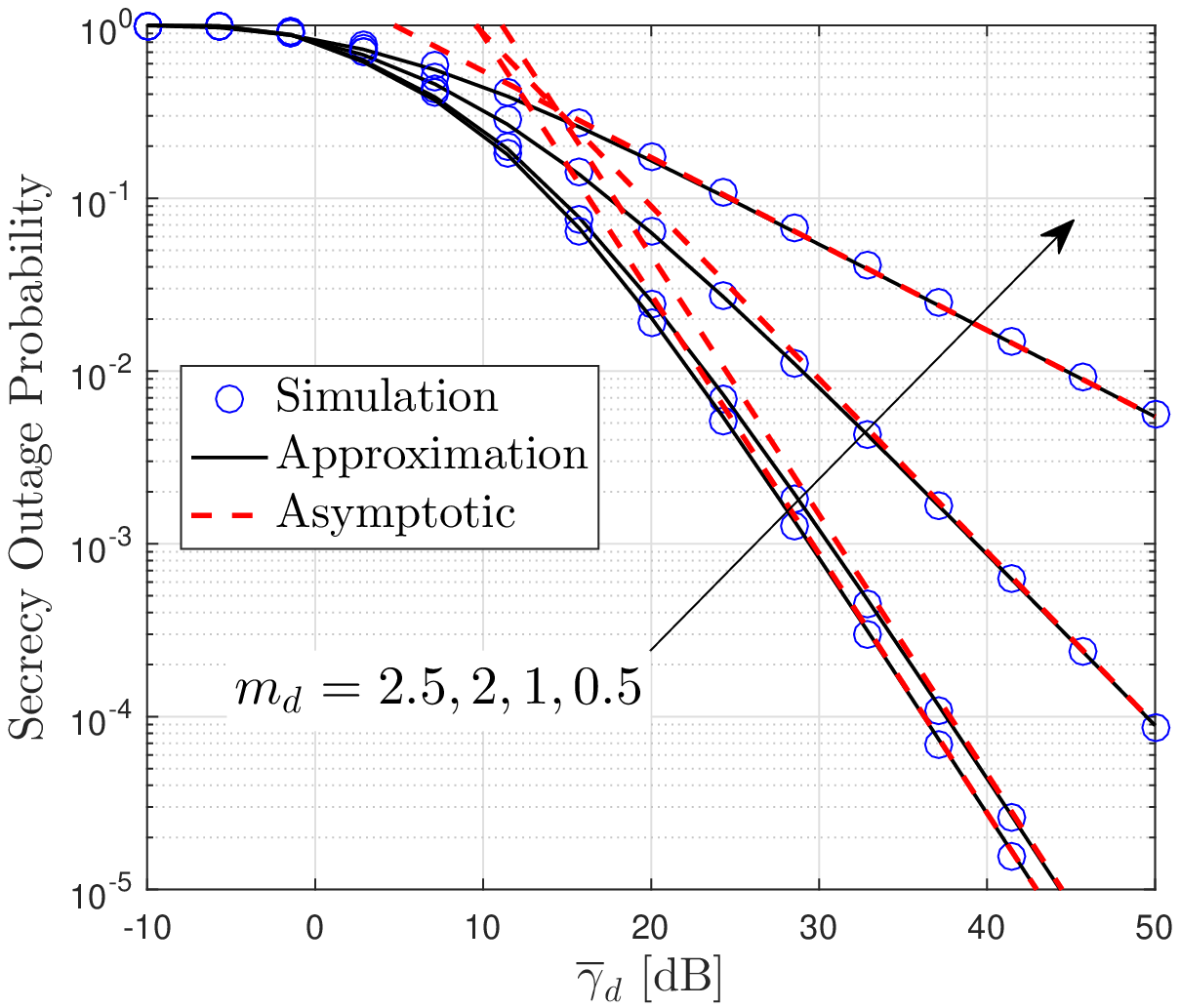}
\caption{$P_{\rm sop}$ versus $\overline \gamma_d$ for $k_d=m_e=k_e=1.5$, $\overline \gamma_e=0$ dB, and $R_s=1$.}\vspace{-0.7cm}
\label{P_D_muH}
\end{figure}
\begin{figure}[!htb]
\setlength{\abovecaptionskip}{0pt}
\setlength{\belowcaptionskip}{10pt}
\centering
\includegraphics[width= 3 in]{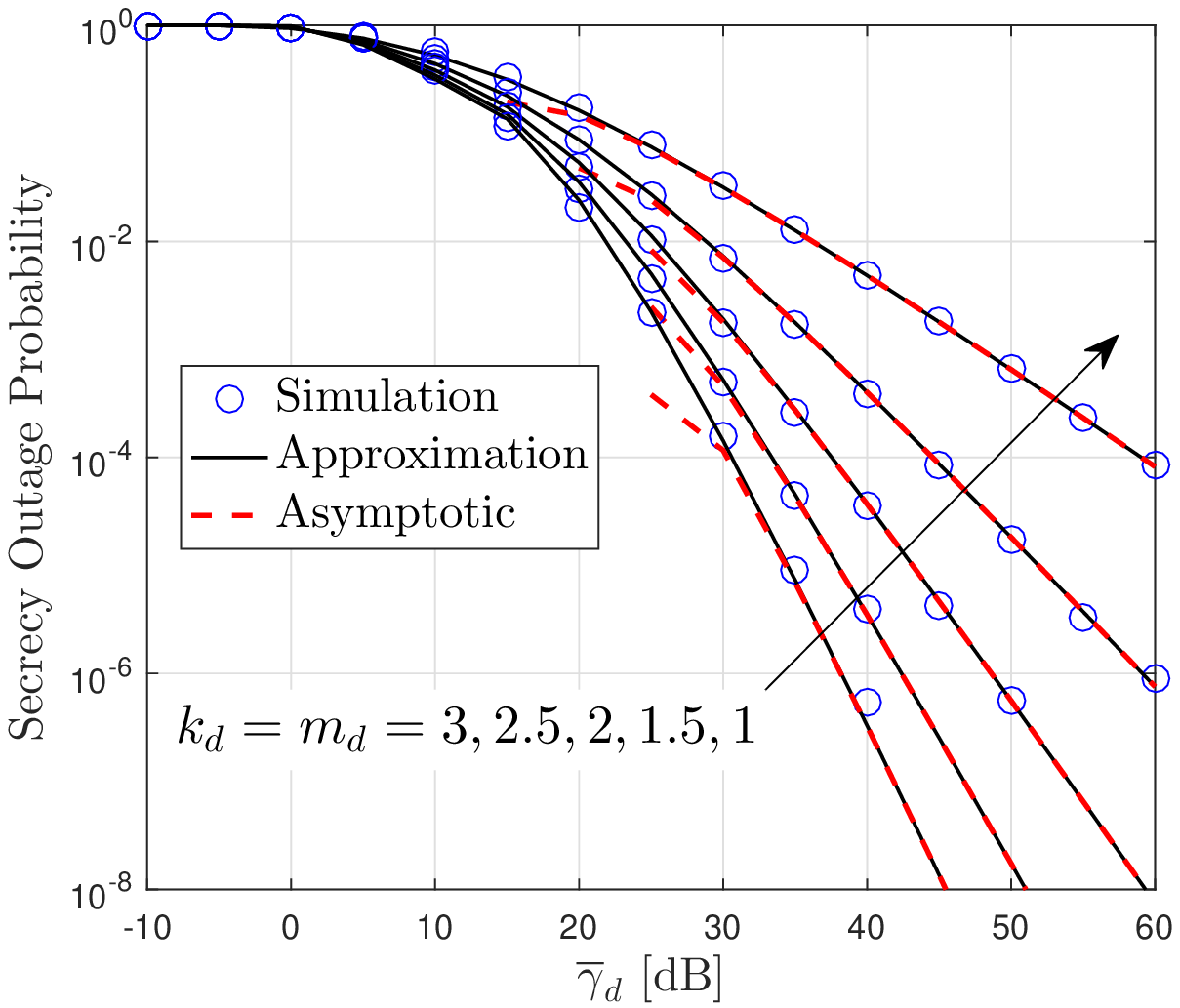}
\caption{$P_{\rm sop}$ versus $\overline \gamma_d$ for $m_e=k_e=2$, $\overline \gamma_e=5$ dB, and $R_s=1$.}
\label{P_D_muH}
\end{figure}

\section{Conclusion}
In this letter, a simple SOP expression was derived with a high accuracy for a small average SNR of the wiretap channel. Although the matching becomes worse for a large average SNR of the wiretap channel, the approximate SOP converges to the exact SOP with increasing the average SNR of the main channel. To obtain the secrecy diversity order, we also derived the asymptotic expression for the SOP valid in the high SNR region of the main channel. The ASOP expression also shows that the ASOP for $k_d=m_d$ is not a linear function with respect to $\overline \gamma_d$ in dB, despite the slowly changing  slope in high SNRs.




\begin{thebibliography}{1}
\bibitem{Bithas}
P. S. Bithas, N. C. Sagias, P. T. Mathiopoulos, G. K. Karagiannidis, and A. A. Rontogiannis, ``On the performance analysis
of digital communications over generalized-$K$ fading channels," \emph{IEEE Commun. Lett.}, vol. 10, no. 5, pp. 353-355, May
2006.

\bibitem{Slim_TWC}
A. Laourine, M.-S. Alouini, S. Affes, and A. Stephenne, ``On the capacity of generalized-$K$ fading channels," \emph{IEEE Trans. Wireless Commun.}, vol. 7, no. 7, pp. 2441-2445, Jul. 2008.

\bibitem{Lateef}
H. Y. Lateef, M. Ghogho, and D. McLernon, ``On the performance analysis of multi-hop cooperative relay networks over
generalized-$K$ fading channels," \emph{IEEE Commun. Lett.}, vol. 15, no. 9, pp. 968-970, Sep. 2011.

\bibitem{Atapattu}
S. Atapattu, C. Tellambura, and H. Jiang, ``A mixture Gamma distribution to model the SNR of wireless channels," \emph{IEEE
Trans. Wireless Commun.}, vol. 10, no. 12, pp. 4193-4203, Dec. 2011.



\bibitem{Bloch}
M. Bloch, J. Barros, M. R. D. Rodrigues, and S. W. McLaughlin, ``Wireless information-theoretic security," \emph{IEEE Trans. Inf. Theory}, vol. 54, no. 6, pp. 2515-2534, Jun. 2008.

\bibitem{huige1}
H. Zhao, Y. Tan, G. Pan, Y. Chen, and N. Yang, ``Secrecy outage on transmit antenna selection/maximal ratio combining in MIMO cognitive radio networks," \emph{IEEE Trans. Veh. Technol.}, vol. 65, no. 12, pp. 10236-10242, Dec. 2016.

\bibitem{Yuanwei}
Y. Liu, Z. Qin, M. Elkashlan, Y. Gao, and L. Hanzo, ``Enhancing the physical layer security of non-orthogonal multiple access in large-scale networks," \emph{IEEE Trans. Wireless Commun.}, vol. 16, no. 3, pp. 1656-1672, Mar. 2017.

\bibitem{Zhang_TVT}
W. Zeng, J. Zhang, S. Chen, K. P. Peppas, and B. Ai, ``Physical layer security over fluctuating two-ray fading channels," \emph{IEEE Trans. Veh. Technol.}, vol. 67, no. 9, pp. 8949-8953, Sep. 2018.




\bibitem{Juan_TCOM}
J. P. Pena-Martin, J. M. Romero-Jerez, and F. J. Lopez-Martinez, ``Generalized MGF of Beckmann fading with applications to wireless communications performance analysis," \emph{IEEE Trans. Commun.}, vol. 65, no. 9, pp. 3933-3943, Sep. 2017.

\bibitem{XLiu}
X. Liu, ``Outage probability of secrecy capacity over correlated lognormal fading channels," \emph{IEEE. Commun. Lett.}, vol. vol. 17, no. 2, pp. 289-292, Feb. 2013.


\bibitem{Peppas}
G. C. Alexandropoulos, and  K. P. Peppas,  ``Secrecy outage analysis over correlated composite Nakagami-$m$/Gamma fading channels," \emph{IEEE. Commun. Lett.}, vol. 22. no. 1, pp. 77-80, Jan. 2018.


\bibitem{Lei_CL}
H. Lei, H. Zhang, I. S. Ansari, C. Gao, Y. Guo, G. Pan, and K. A. Qaraqe, ``Performance analysis of physical layer security over generalized-$K$ fading channels using a mixture Gamma distribution," \emph{IEEE. Commun. Lett.}, vol. 20, no. 2, pp. 408-411, Feb. 2016.

\bibitem{Lei_FITEE}
H. Lei, I. S. Ansari, C. Gao, Y. Guo, G. Pan, and K. A. Qaraqe, ``Secrecy performance analysis of single-input multiple-output
generalized-$K$ fading channels," \emph{Front. Inform. Technol. Electron. Eng.}, vol. 17, no. 10, pp. 1074-1084, Oct. 2016.


\bibitem{Liang}
L. Wu, L. Yang, J. Chen, and M.-S. Alouini, ``Physical layer security for cooperative relaying over generalized-$K$ fading
channels," \emph{IEEE Wireless Commun. Lett.}, vol. 7, no. 4, pp. 606-609, Aug. 2018.

\bibitem{Zhao_CL}
Z. Wang, H. Zhao, S. Wang, J. Zhang, and M.-S. Alouini, ``Secrecy analysis in SWIPT systems over generalized-$K$ fading channels," \emph{IEEE Commun. Lett.}, vol. 23, no. 5, pp. 834-837, May 2019.



\bibitem{Kang_WCL}
L. Kong, and G. Kaddoum, ``Secrecy characteristics with assistance of mixture Gamma distribution," \emph{IEEE Wireless Commun. Lett.}, accepted for publication. DOI: 10.1109/LWC.2019.2907083.

\bibitem{Lei_IET}
H. Lei, I. S. Ansari, C. Gao, Y. Guo, G. Pan, and K. A. Qaraqe, ``Physical-layer security over generalised-$K$ fading channels," \emph{IET Commun.}, vol. 10, no. 16, pp. 2233-2237,  Nov. 2016.


\bibitem{Lei_TVT}
H. Lei C. Gao, I. S. Ansari, Y. Guo, G. Pan, and K. A. Qaraqe, ``On physical-layer security over SIMO generalized-$K$
fading channels," \emph{IEEE Trans. Veh. Technol.}, vol. 65, no. 9, pp. 7780-7785, Sep. 2016.





\bibitem{Pan_TVT}
G. Pan, C. Tang, X. Zhang, T. Li, Y. Weng, and Y. Chen, ``Physical-layer security over non-small-scale fading channels," \emph{IEEE Trans. Veh. Technol.}, vol. 65, no. 3, pp. 1326-1339,  Mar. 2016.

\bibitem{Holtzman}
J. M. Holtzman, ``A simple, accurate method to calculate spread multiple access error probabilities," \emph{IEEE Trans. Commun.}, vol. 40, no. 3, pp. 461-464, Mar. 1992.









\bibitem{Gradshteyn}
I. S. Gradshteyn, I. M. Ryzhik, \emph{Table of Integrals, Series, and Products}, 7th edition. Academic Press, 2007.


\bibitem{Wolfram}
``Wolfram Functions." Available: \url{http://functions.wolfram.com/07.34.20.0012.02}.

\bibitem{Zhao_CL2}
H. Zhao, L. Yang, A. S. Salem, and M.-S. Alouini, ``Ergodic capacity under power adaption over Fisher-Snedecor $\mathcal{F}$ fading channels," \emph{IEEE Commun. Lett.}, vol. 23, no. 3, pp. 546-549, Mar. 2019.


\end{thebibliography}
\end{document}